\PassOptionsToPackage{unicode}{hyperref}
\PassOptionsToPackage{hyphens}{url}
\PassOptionsToPackage{dvipsnames,svgnames,x11names}{xcolor}
\documentclass[
  12pt]{article}
\usepackage{comment}
\usepackage{amsmath,amssymb,amsthm}
\usepackage{iftex}
\ifPDFTeX
  \usepackage[T1]{fontenc}
  \usepackage[utf8]{inputenc}
  \usepackage{textcomp} 
\else 
  \usepackage{unicode-math}
  \defaultfontfeatures{Scale=MatchLowercase}
  \defaultfontfeatures[\rmfamily]{Ligatures=TeX,Scale=1}
\fi
\usepackage{lmodern}
\ifPDFTeX\else  
\fi
\IfFileExists{upquote.sty}{\usepackage{upquote}}{}
\IfFileExists{microtype.sty}{
  \usepackage[]{microtype}
  \UseMicrotypeSet[protrusion]{basicmath} 
}{}
\makeatletter
\@ifundefined{KOMAClassName}{
  \IfFileExists{parskip.sty}{%
    \usepackage{parskip}
  }{
    \setlength{\parindent}{0pt}
    \setlength{\parskip}{6pt plus 2pt minus 1pt}}
}{
  \KOMAoptions{parskip=half}}
\makeatother
\usepackage{xcolor}
\makeatletter
\ifx\paragraph\undefined\else
  \let\oldparagraph\paragraph
  \renewcommand{\paragraph}{
    \@ifstar
      \xxxParagraphStar
      \xxxParagraphNoStar
  }
  \newcommand{\xxxParagraphStar}[1]{\oldparagraph*{#1}\mbox{}}
  \newcommand{\xxxParagraphNoStar}[1]{\oldparagraph{#1}\mbox{}}
\fi
\ifx\subparagraph\undefined\else
  \let\oldsubparagraph\subparagraph
  \renewcommand{\subparagraph}{
    \@ifstar
      \xxxSubParagraphStar
      \xxxSubParagraphNoStar
  }
  \newcommand{\xxxSubParagraphStar}[1]{\oldsubparagraph*{#1}\mbox{}}
  \newcommand{\xxxSubParagraphNoStar}[1]{\oldsubparagraph{#1}\mbox{}}
\fi
\makeatother

\usepackage{longtable,booktabs,array}
\usepackage{calc} 
\usepackage{etoolbox}
\makeatletter
\patchcmd\longtable{\par}{\if@noskipsec\mbox{}\fi\par}{}{}
\makeatother
\IfFileExists{footnotehyper.sty}{\usepackage{footnotehyper}}{\usepackage{footnote}}
\makesavenoteenv{longtable}
\usepackage{graphicx}
\makeatletter
\def\maxwidth{\ifdim\Gin@nat@width>\linewidth\linewidth\else\Gin@nat@width\fi}
\def\maxheight{\ifdim\Gin@nat@height>\textheight\textheight\else\Gin@nat@height\fi}
\makeatother
\setkeys{Gin}{width=\maxwidth,height=\maxheight,keepaspectratio}
\makeatletter
\def\fps@figure{htbp}
\makeatother

\makeatletter
\@ifpackageloaded{caption}{}{\usepackage{caption}}
\AtBeginDocument{%
\ifdefined\contentsname
  \renewcommand*\contentsname{Table of contents}
\else
  \newcommand\contentsname{Table of contents}
\fi
\ifdefined\listfigurename
  \renewcommand*\listfigurename{List of Figures}
\else
  \newcommand\listfigurename{List of Figures}
\fi
\ifdefined\listtablename
  \renewcommand*\listtablename{List of Tables}
\else
  \newcommand\listtablename{List of Tables}
\fi
\ifdefined\figurename
  \renewcommand*\figurename{Figure}
\else
  \newcommand\figurename{Figure}
\fi
\ifdefined\tablename
  \renewcommand*\tablename{Table}
\else
  \newcommand\tablename{Table}
\fi
}
\@ifpackageloaded{float}{}{\usepackage{float}}
\floatstyle{ruled}
\@ifundefined{c@chapter}{\newfloat{codelisting}{h}{lop}}{\newfloat{codelisting}{h}{lop}[chapter]}
\floatname{codelisting}{Listing}

\makeatother
\makeatletter
\@ifpackageloaded{caption}{}{\usepackage{caption}}
\@ifpackageloaded{subcaption}{}{\usepackage{subcaption}}
\makeatother

\ifLuaTeX
  \usepackage{selnolig}  
\fi
\usepackage[]{natbib}
\usepackage{bookmark}

\IfFileExists{xurl.sty}{\usepackage{xurl}}{} 
\hypersetup{
  pdftitle={Title},
  pdfauthor={Author 1; Author 2},
  pdfkeywords={3 to 6 keywords, that do not appear in the title},
  colorlinks=true,
  linkcolor={blue},
  filecolor={Maroon},
  citecolor={Blue},
  urlcolor={Blue},
  pdfcreator={LaTeX via pandoc}}
\newcommand{\anon}{1}

\begin{document}
\newtheorem{definition}{Definition}[section]
\newtheorem{theorem}{Theorem}[section]
\newtheorem{prop}{Proposition}[section]
\newtheorem{corollary}{Corollary}[prop]
\newtheorem{lemma}{Lemma}[section]
\newtheorem{remark}{Remark}
\newcommand{\blind}{0}
\newcommand{\Expect}[1]{\mathbb{E}\left[ #1 \right]}
\newcommand{\av}{\vec{a}}
\newcommand{\diffd}{\textnormal{d}}
\newcommand{\deriv}[2]{\frac{\diffd #1}{\diffd #2}}
\newcommand{\pderiv}[2]{\frac{\partial #1}{\partial #2}}
\newcommand{\inprod}[2]{\left\langle #1,#2\right\rangle}
\newcommand{\abs}[1]{\left| #1 \right|}
\newcommand{\fourier}[1]{\mathcal{F}\left[#1\right]}
\newcommand{\fourieri}[1]{\mathcal{F}^{-1}\left[#1\right]}
\newcommand{\norm}[1]{\left|\left|  #1 \right| \right|}
\newcommand{\Ltwo}[2]{\mathcal{L}^2_{#2}\left( #1 \right)}
\newcommand{\LtwoStwo}{\mathcal{L}^2\left( \mathbb{S}^2 \right)}
\newcommand{\Disk}{\mathbb{D}}
\newcommand{\SPAN}[1]{\text{span}\left(#1\right)}
\newcommand{\Stwo}{\mathbb{S}^2}
\newcommand{\mat}[1]{\boldsymbol{#1}}
\newcommand{\Floor}[1]{\lfloor #1 \rfloor }
\newcommand{\Ceil}[1]{\lceil #1 \rceil }
\newcommand{\tr}{\text{tr}}
\newcommand\numberthis{\addtocounter{equation}{1}\tag{\theequation}}
\newcommand{\degrees}{^{\text{o}}}
\newcommand{\Var}[1]{\mathrm{Var}\left[ #1 \right]}
\newcommand{\Cov}[2]{\mathrm{Cov}\left(#1,#2\right)}
\newcommand{\R}{\mathbb{R}}
\newcommand{\intR}{\int_{\R}}
\newcommand{\intPR}{\int_0^{\infty}}
\newcommand{\intNR}{\int_{-\infty}^0}
\newcommand{\intLim}[2]{\int_{#1}^{#2}}
\newcommand{\Pois}[1]{\mathrm{Poisson}\left(#1\right)}
\newcommand{\Prob}[1]{\mathbb{P}\left(#1\right)}
\newcommand{\Norm}[2]{\mathcal{N}\left(#1,#2\right)}
\newcommand{\MVN}[2]{\mathcal{MVN}\left(#1,#2\right)}
\newcommand{\ai}{\{a_i\}_{i=1}^{n_1}}
\newcommand{\bi}{\{b_i\}_{i=1}^{n_2}}
\newcommand{\aij}{\{a_{ij}\}_{i=1}^{n_{1j}}}
\newcommand{\bij}{\{b_{ij}\}_{i=1}^{n_{2j}}}
\newcommand{\eye}{\mathrm{\mat{I}}}
\newcommand{\indep}{\perp\!\!\!\perp}
\newcommand{\rlang}{\texttt{R} }
\newcommand{\wvlet}{Solid Needlet }
\newcommand{\gbp}{generalized Beta-Prime }
\newcommand{\scr}[1]{\mathscr{#1}}
\newcommand{\sinc}[1]{\mathrm{sinc}\left(#1\right)}
\newcommand{\fraC}[2]{\frac{#1}{#2}}
\newcommand{\sneed}{\varrho}
\newcommand{\sign}[1]{\text{sgn}\left(#1 \right)}
\numberwithin{equation}{section}
\numberwithin{figure}{section}
\numberwithin{table}{section}

\newcommand{\diff}[2]{\frac{\partial #1}{\partial #2}}
\newcommand{\diffr}[1]{\diff{#1}{r}}
\newcommand{\diffth}[1]{\diff{#1}{\theta}}
\newcommand{\diffz}[1]{\diff{#1}{z}}

\newcommand{\vth}{V_{\theta}}

\newcommand{\twochoices}[2]{\left\{ \begin{array}{lcc}
        \displaystyle #1 \\ \vspace{-10pt} \\
        \displaystyle #2 \end{array} \right. } 

\newcommand{\threechoices}[3]{\left\{ \begin{array}{lcc}
        #1 \\ #2 \\ #3 \end{array} \right. }    

\newcommand{\fourchoices}[4]{\left\{ \begin{array}{lcc}
        #1 \\ #2 \\ #3 \\ #4 \end{array} \right. }      

\newcommand{\twovec}[2]{\left(\begin{array}{c} #1 \\ #2 \end{array}\right)}
\newcommand{\threevec}[3]{\left(\begin{array}{c} #1 \\ #2 \\ #3 \end{array}\right)}
\newcommand{\twomatrix}[4]{\left(\begin{array}{cc} #1 & #2 \\ #3 & #4 \end{array}\right)}

\def\spacingset#1{\renewcommand{\baselinestretch}%
{#1}\small\normalsize} \spacingset{1}


\if1\anon
{
  \title{\bf Calibration Assessment for Nested Uncertainty Sets}
  \author{Matthew LeDuc\hspace{.2cm}\\
    Department of Applied Mathematics, University of Colorado Boulder}
  \maketitle
} \fi

\if0\anon
{
  \bigskip
  \bigskip
  \bigskip
  \begin{center}
    {\LARGE\bf Probabilistic Assessment of Model Reliability Using a Multinomial-Dirichlet Model}
\end{center}
  \medskip
} \fi

\maketitle
\begin{abstract}
    We present a Bayesian framework for assessing the calibration of nested uncertainty sets from simulation studies. Calibration is typically evaluated by repeatedly simulating data from a known model and comparing empirical coverage probabilities with their nominal values, an approach that often requires hundreds or thousands of independent model evaluations to obtain reliable estimates. Such computational demands can be prohibitive when the forward models are expensive to evaluate, for example in data assimilation using a large Earth-system model, or may not be applicable, for exmaple when the output of the simulation is a single or multiple uncertainty sets rather than a full distribution. We present a method that leverages the nested structure of uncertainty sets across nominal coverage levels to assess the calibration of these models. The method, which is based on a Dirichlet-Multinomial model, allows for testing both exact and region-based notions of model calibration in an efficient manner using Bayes factors. Unlike methods that require a full predictive distribution, this method works for procedures that directly produce prediction or confidence intervals. We demonstrate the method on a data assimilation problem containing model form error and show that it is capable of detecting miscalibration even when there are only a limited number of simulation studies available.
\end{abstract}
\noindent%

\spacingset{1.8} 
\section{Introduction and Motivation}

The ability of a statistical model to accurately quantify uncertainty is as important as its ability to produce accurate point estimates. In applications such as robotics, weather forecasting, and remote sensing, decisions are often based not only on the predicted value of a quantity of interest, but also on the reliability of the accompanying uncertainty estimates. A model whose credible or prediction intervals are systematically too narrow or too wide can lead to overconfident or unnecessarily conservative decisions, even when its point predictions are accurate. This can be especially important in applications where safety is critical, such as drug testing or autonomous system design, where overconfident models can cause significant harm (\cite{valentin2026recipescalibrationcheckssafetycritical}). Consequently, assessing the calibration of probabilistic predictions is a fundamental component of model validation (\cite{brocker2007increasing,pmlr-v80-kuleshov18a,gebetsberger2017estimation}).

A standard approach for evaluating calibration is to perform a simulation study. Independent datasets are generated from a known data-generating mechanism, the model is fit to each dataset, and the proportion of credible intervals containing the true parameter is reported at one or more nominal coverage levels. If these empirical coverages are close to the nominal values, the model is deemed to be well calibrated. This approach is simple and widely used, but it has several important limitations. First, the reported empirical coverages are themselves estimated from a finite number of simulations and therefore possess non-negligible sampling uncertainty (\cite{deperettiGraphicalMethodsInvestigating2010}). Although this uncertainty can be quantified using classical binomial methods, it is rarely incorporated into a formal inferential framework for assessing calibration. Second, obtaining sufficiently precise estimates may require hundreds or thousands of independent model evaluations. Such simulation studies are computationally feasible for simple models, but become impractical for modern Bayesian and data assimilation algorithms whose forward models may require minutes or hours per evaluation. Finally, existing approaches such as the Probability Integral Transform (e.g. \cite{gneiting2007probabilistic}) are not applicable in settings where the full predictive distribution is not available, such as a model that generates confidence intervals for a parameter of interest. This suggests the need for calibration assessment tools that directly handle these data. While tools for adjusting these intervals to achieve calibration exist \citep{JMLR:v27:24-1139}, this method is intended to do inference about the calibration of a given model. 

To make these ideas precise, let $\pi(\theta\mid x)$ denote a distribution for a scalar parameter of interest. For a nominal coverage level $\alpha\in(0,1)$, let $I_\alpha(x)$ denote the corresponding $100\alpha\%$ credible interval. The true coverage probability is defined by
\begin{equation}
p(\alpha)=\Pr_{\theta_0,x}\left(\theta_0\in I_\alpha(x)\right),
\end{equation}
where the probability is taken over both the true parameter and the observed data. A model is said to be calibrated if $p(\alpha)=\alpha$
for every nominal level $\alpha$.

Suppose a simulation study consists of $n$ independent realizations $(\theta_0^{(j)},x^{(j)})$. The conventional estimate of the calibration function is obtained from the empirical coverage,
\begin{equation}
\hat p(\alpha)
=
\frac{1}{n}
\sum_{j=1}^{n}
\mathbf 1\left(
\theta_0^{(j)}
\in
I_\alpha(x^{(j)})
\right).
\end{equation}
While this estimator is unbiased, it does not by itself provide a principled framework for inference on calibration, particularly when the number of simulations is limited. Additionally, since the nestedness of the $I_{\alpha}(x^{(j)})$ induces correlation in the data, care should be taken when calibration is assessed simultaneously across multiple nominal levels.

In this paper we develop a Bayesian framework for inference on calibration curves. Rather than treating empirical coverages independently, we model the multinomial counts across nominal coverage levels using a Dirichlet--multinomial model, thereby accounting for the dependence induced by nested credible intervals. The resulting posterior distribution provides uncertainty quantification for the entire calibration curve and enables Bayes factor tests of both point and region hypotheses, including tolerance-band and one-sided notions of calibration. Building on the calibration checks of \cite{valentin2026recipescalibrationcheckssafetycritical}, the proposed framework replaces empirical coverage estimates with formal probabilistic inference while remaining computationally inexpensive. We illustrate the methodology on a data assimilation problem based on a Duffing oscillator and show that we are able to use this method to perform inference about the model calibration under model form error as well as error in the filter parameters, and that we are able to test whether the model is systematically under or over conservative, even with a relatively small number of simulations. Although motivated by computationally intensive data assimilation systems or methods that do not provide full predictive distributions, the proposed methodology is applicable to any simulation-based assessment of model calibration.

\section{Method}

Consider a predictive model which, upon receipt of data $x^{(j)}$ returns a predictive distribution for some parameter $\theta_0^{(j)}$. By any one of various means credible sets for the parameter $\theta_0$ can be derived from this distribution. 

Fix $d$ nominal levels
$0 < \alpha_1 < \alpha_2< ... < \alpha_d < 1$. Since the credible sets are nested,
$I_{\alpha_{i}}(x) \subset I_{\alpha_{i+1}}(x)$, each of $n$ independent simulation
replicates $(\theta_0^{(j)}, x^{(j)})$ falls into exactly one of $d+1$
mutually exclusive and exhaustive categories:
\begin{equation}
\begin{split}
    A_1 &: \theta_0 \in I_{\alpha_1}(x)\\
    A_2 &: \theta_0 \in I_{\alpha_2}(x) \setminus I_{\alpha_1}(x) \\
    \vdots \\
     A_{d+1} &: \theta_0 \notin I_{\alpha_d}(x) \\
\end{split}
\end{equation}
with "gap probabilities" $q_1, q_2,...,q_{d+1} \geq 0$ satisfying $\sum_{i=1}^{d+1}q_i = 1$.
The true coverage probabilities at the nominal levels are
\begin{equation}\label{eq:c_def}
    c_0=0, c_1 = q_1, c_2=q_1+q_2, ..., c_i = \sum_{i'=1}^iq_{i'}, c_{d+1}=1
\end{equation}
so that $c_i - c_{i-1} = q_i \geq 0$, guaranteeing $c_{i} \geq c_{i-1}$ automatically. The joint distribution of the $c_i$ given the data is derived in Appendix \ref{app:cidist}, but due to the formulation of the model it is more natural to work with $q_i$ instead. These probabilities sum to 1, and thus the vector $\vec{q}\in \Delta_d$, the simplex in $\mathbb{R}^{d+1}$. Any calibration criterion expressible in terms of affine inequalities on the cumulative coverages $\vec{c}$, such as the tolerance band criterion or testing for overconfidence, corresponds to a set of linear constraints on $\vec{q}$, so we lose little by this formulation.  

Let $n_i$ denote the observed counts in $A_i$, with
$\sum_i n_i = n$. Since the sets $A_i$ are disjoint and exhaustive, we can say that the quantities $n_i$ are jointly multinomially distributed as
\begin{equation}\label{eq:multinom_dist}
    n_1, n_2,...,n_{d+1} \mid q_1, q_2, ...,q_{d+1}
    \sim \mathrm{Multinomial}(n,\, q_1,q_2,...,q_{d+1})
\end{equation}
This serves as the starting point for our analysis of empirical reliability curves. 
We consider the Bayesian perspective on model calibration checking via Bayes factors. First, before seeing the results of the calibration test we place a Dirichlet prior on the probabilities $q_i$ so that
\begin{equation}
    \label{eq:dirichlet_prior}
     q_1, q_2,...,q_{d+1} \sim \mathrm{Dir}(\gamma_1, \gamma_2,...,\gamma_{d+1})
\end{equation}
which by conjugacy gives
\begin{equation}
    \label{eq:posterior_dist_q}
     q_1, q_2,...,q_{d+1}|n_1,n_2,...n_{d+1} \sim \mathrm{Dir}(a_1,a_2,...,a_{d+1})
\end{equation}
where $a_i=\gamma_i+n_i$.
Natural choices include the uniform prior $\gamma_i = 1$ for all $i$ or the Jeffreys prior $\gamma_i=1/2$. 

\begin{remark}
As $d \to \infty$ with the nominal levels becoming dense in $(0,1)$, the
Dirichlet model on the gap probabilities converges to a Dirichlet process
prior on the density $g$ in the representation $f(\alpha) = \int_0^\alpha
g(t)\, dt$. This provides a natural
nonparametric limit of the multinomial--Dirichlet approach and connects the
finite-level calibration assessment to a random function estimation problem for the
reliability curve, as done in isotonic regression (\cite{jiang2011smooth}).
\end{remark}

\subsection{Calibration Regions}

Now that the calibration data have been collected, we test the two hypotheses
\begin{equation}\label{eq:hypotheses}
    \begin{split}
        H_0: \vec{q}\in R\\
        H_1: \vec{q}\notin R
    \end{split}
\end{equation}
for some $R\subset \Delta_d$ using Bayes factors (\cite{kass1995bayesfactors, jarosz2014odds}). 
Because the hypothesis is specified by an arbitrary region, the framework immediately accommodates exact calibration, tolerance bands, one-sided calibration, and other application-specific notions of acceptable calibration. We will discuss the special case of exact calibration in the following section. In general, this form of the test, where we check whether the vector of gap probabilities lies in a set $R$, is most attractive for applications. Consider, for example, a safety-critical setting where the credible sets being too large is not a concern, but them being too small could have a significant drawback. One example of this is in autonomous systems, as too small of error estimates on the location of a robot could lead to it inadvertently being lead off a cliff, or healthcare where overconfidence in a novel treatment may cost people their lives \citep{valentin2026recipescalibrationcheckssafetycritical}. Additionally, exact calibration may not always be required, or different notions of calibration may be of interest \citep{resin2026hierarchies}. It may be acceptable to have the model nearly calibrated so that, for example, 
\begin{equation}
    \mathrm{max}\abs{\frac{c_i-\alpha_i}{\sqrt{\alpha_i(1-\alpha_i)}}} <\epsilon
\end{equation}
for some tolerance $\epsilon$. In that case, the test really only needs to check whether the model is acceptably close to calibrated rather than whether it is exactly calibrated. 

Letting $m_0(x)$ and $m_1(x)$ be the models under $H_0$ and $H_1$ respectively, the Bayes factor $BF_{10}$ is given by
\begin{equation}
    \label{eq:bayesfactor}
    BF_{10} = \frac{m_1(x)}{m_0(x)}
\end{equation}
where the models are evaluated at the observed data. This can be written in terms of integrals over the region of interest and its complement on the simplex as
\begin{equation}
    BF_{10}(R) = \frac{\int_{R^c} p(\vec{n}|\vec{q})\pi_{R^c}(\vec{r})\diffd \vec{q}}{\int_{R} p(\vec{n}|\vec{q})\pi_{R}(\vec{r})\diffd \vec{q}}
\end{equation}
where $\pi_{A}$ is the restriction of the prior to the set $A$ properly normalized. Letting 
\begin{equation}
        p_{\pi}(R)=\int_{R}\pi(\vec{q})\diffd \vec{q}
\end{equation}
be the prior probability that $\vec{q}\in R$ and 
\begin{equation}
    p_*(R)=\frac{\int_R p(\vec{n}|\vec{q})\pi(\vec{q})\diffd \vec{q}}{\int_{\Delta d} p(\vec{n}|\vec{q})\pi(\vec{q})\diffd \vec{q}}
\end{equation}
be the posterior probability, we can say that
\begin{equation}
    \label{eq:bf10_region}
    BF_{10}(R)=\frac{p_*(R)(1-p_{\pi}(R))}{p_{\pi}(R)(1-p_*(R))}
\end{equation}
which, since we can easily sample from the prior and posterior (assuming the prior is proper), can be calculated via Monte-Carlo.

While the calculation is straightforward, the interpretation is not.
In \cite{kass1995bayesfactors}, a table is provided for suggested interpretations of the Bayes factors (See Section 3.2 therein). The table suggests that $BF_{10}>3.2$ provides "substantial" evidence for the alternative hypothesis, while $BF_{10}>10$ and $BF_{10}>100$ provide "strong" and "decisive" evidence respectively. A Bayes factor less than 3.2 is considered "not worth more than a bare mention" and a Bayes factor less than 1 is considered evidence in favor of $H_0$, with strength of evidence arranged in the same manner. The textbook \cite{jeffreys1998theory} provides another similar set of interpretations. 

 The Bayes factor is known to be sensitive to the prior distribution and does not have an associated decision rule like a likelihood ratio test might. In addition, the Bayes factor is not defined if the prior is improper. For this reason, the prior distribution should be reported as well. This is especially important in the context we are considering, since when only a small number of model runs can be used to estimate calibration the results will be especially sensitive to the prior chosen. In a frequentist setting, we can do away with the prior, which motivates future work in that direction. The same geometric formulation naturally suggests likelihood-ratio tests under region constraints, although efficient finite-sample calibration of these tests is an interesting topic for future work.

\subsection{Exact Calibration}

A special case of the model described in the previous section occurs when $R=\{\vec{p}_0\}$ is a singleton set. In this case, we are able to derive a closed form of the Bayes factor. 
Under the model discussed above, the null model is 
\begin{equation}\label{eq:null_model}
    n_1,n_2,...,n_{d+1}|\vec{p}_0\sim \mathrm{Multinomial}(n,\, \vec{p}_0)
\end{equation}
while the alternative model $m_1$ is given by
\begin{equation}
    n_1,n_2,...,n_{d+1}|\vec{\gamma}\sim \mathrm{DirMult}\left(n, \vec{\gamma}\right)
\end{equation}
the Dirichlet-Multinomial distribution with probability mass function
\begin{equation}
    \label{eq:dirmult_pmf}
    p(\vec{x}|\vec{\gamma})=\frac{\Gamma\left(\sum_i \gamma_i\right)\Gamma(n+1)}{\Gamma\left(n+\sum_i\gamma_i\right)}\prod_{i=1}^{d+1}\frac{\Gamma(x_i+\gamma_i)}{\Gamma(\gamma_i)x_i!}
\end{equation}
Note that the parameter in the Dirichlet-Multinomial distribution for the calculation of the marginal likelihood of the model is the prior concentration vector $\vec{\gamma}$, rather than the posterior concentration vector $\vec{a}$.
So, the Bayes factor $BF_{10}(\vec{p}_0)$ for this test is given by
\begin{equation}
    \label{eq:bayesfactor_closedform}
    BF_{10}(\vec{p}_0) = \frac{\Gamma\left(\sum_i \gamma_i \right)}{\Gamma\left(n+\sum_i\gamma_i\right)}\frac{\prod_i\Gamma\left(n_i+\gamma_i\right)/\Gamma(\gamma_i)}{\prod_i(p_0)_i^{n_i}}
\end{equation}
which allows us to calculate the Bayes factor without any sampling. 
\section{Application to Kalman Filtering under Model-Form Error}

We demonstrate the method and its implications in this area by examining a spring-mass-damper (SMD) system with nonlinearities in the spring that are unknown to the modeler. Consider the stochastic differential equation
\begin{equation}
    \label{eq:duffing_oscillator}
    \begin{split}
    \begin{pmatrix}
        x'\\v'
    \end{pmatrix}=\begin{pmatrix}
        0&1\\-k&-c
        \end{pmatrix}
        \begin{pmatrix}
            x\\v
    \end{pmatrix}+
    \begin{pmatrix}
        0\\ -\epsilon x^3+ \mathcal{W}
    \end{pmatrix}\\
    x(0)=1 \ , \ v(0)=0
    \end{split}
\end{equation}
where $c$ is the damping strength, $k$ the spring constant, $\epsilon$ a parameter that controls nonlinearity, and $\mathcal{W}$ represents a zero mean white noise process, and an experiment wherein a discrete Kalman filter is used to predict the location of the mass at a sequence of equally spaced time steps with step size $h$ based on the previous location. At time step $t$, the optimal prediction and associated predictive covariance matrix are given by

\begin{equation}
    \label{eq:kalman_filter_prediction}
    \vec{x}_{t+1|t} =\mat{F}\vec{x}_{t|t}
\end{equation}
\begin{equation}
    \label{eq:kalman_filter_predcov}
    \mat{P}_{t+1|t}=\mat{FP}_{t|t}\mat{F}^T+\begin{pmatrix}
        0&0\\0&h^2\sigma_w^2
    \end{pmatrix}
\end{equation}
where $\mat{F}$ is a matrix corresponding to the solution method for the equation in Eq. \eqref{eq:duffing_oscillator} when $\epsilon=0$, for example an Euler or Runge-Kutta method, and $\vec{x}_{t|t}$ and $\mat{P}_{t|t}$ are the posterior mean and covariances after incorporating observations of the state at the previous time step. For more details on Kalman filtering, see \cite{welch1995introduction}.

In the Kalman filter, the predictive distribution of the next observation $\vec{z}_{t+1}$ given the previous time steps is given by $\mathcal{N}(\vec{x}_{t+1|t},\mat{P}_{t+1|t})$. The calibration of the model is checked by estimating how often the true position of the spring falls within $I_{\alpha}$ for $\alpha=(0.05,0.25,0.50,0.75,0.95,0.99)$ over a range of $\epsilon\in[0,1]$. Due to the setup of the problem, where everything but the nonlinearity is known exactly by the filter, we expect to see perfect calibration at $\epsilon=0$ and slowly worsening performance as $\abs{\epsilon}$ increases. An example of two simulated trajectories for $\epsilon=0,1$ is shown in Figure \ref{fig:kf_fig}. Visually, it appears that the model is well-calibrated when $\epsilon=0$ and likely poorly calibrated when $\epsilon=1$. The goal now is to determine whether this method gives results that agree with our intuition. 

\begin{figure}
\centering
\begin{subfigure}{0.85\textwidth}
\includegraphics[width=\textwidth]{ 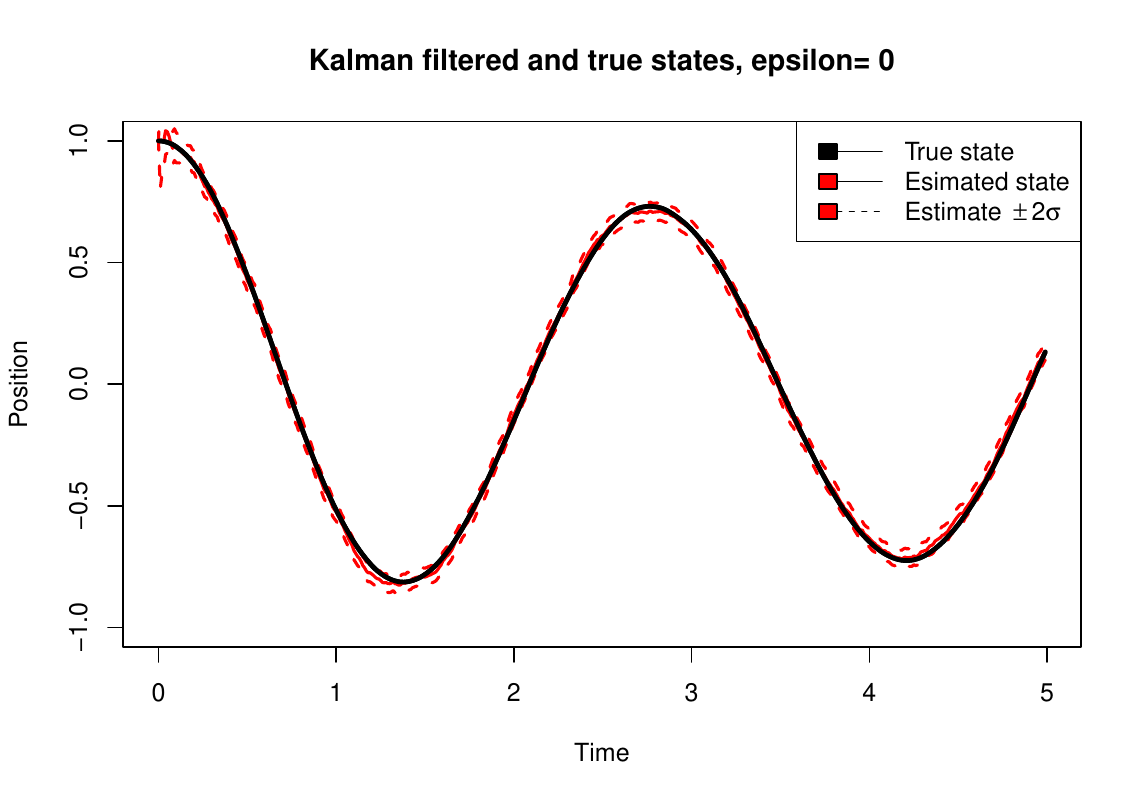}
\caption{$\epsilon=0$}
\label{fig:kf_eps0}
\end{subfigure}
\hfill
\begin{subfigure}{0.85\textwidth}
\includegraphics[width=\textwidth]{ 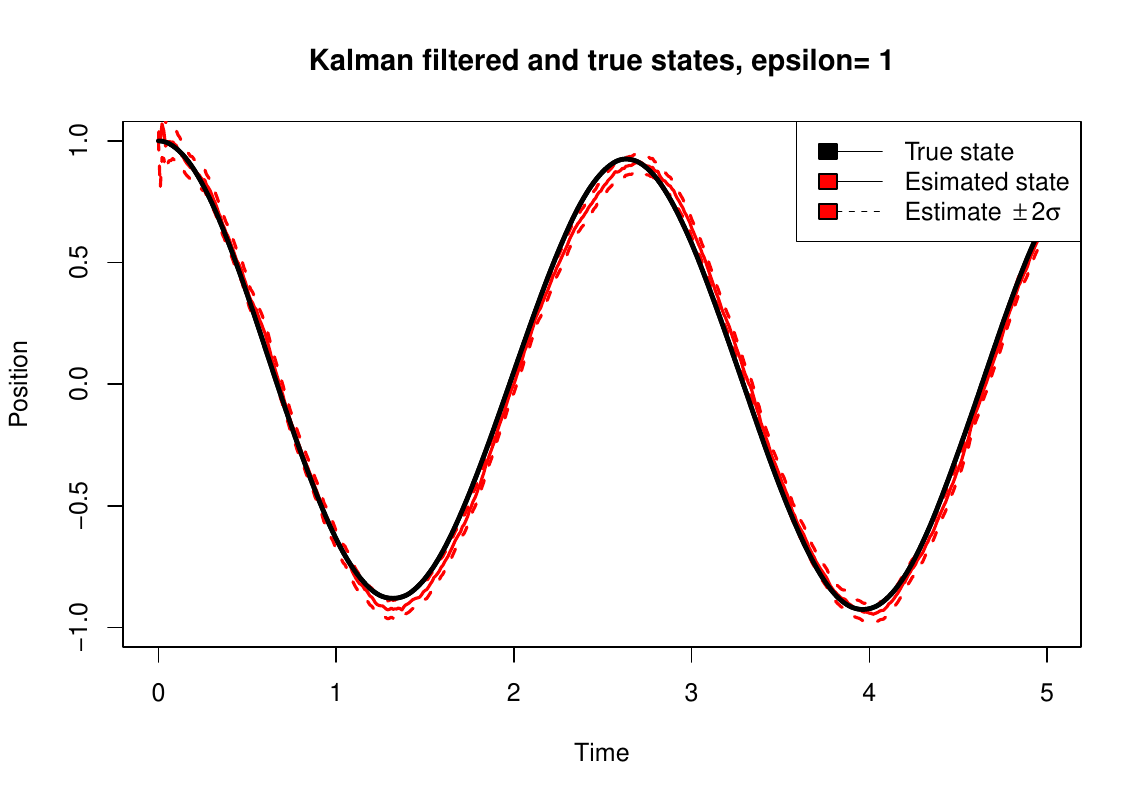}
\caption{$\epsilon=1$}
\label{fig:kf_eps1}
\end{subfigure}
\caption{Simulated trajectories of the spring-mass damper system for $\epsilon=0,1$.}
\label{fig:kf_fig}
\end{figure}

The reliability curves were calculated based on varying numbers of roughly independent observations (separated by 100 time steps). Those plots, for $n=20,200,2000$ samples, are shown in Figure \ref{fig:reliability_plots} for the values of $\epsilon = $ 0 (red), 0.25 (yellow), 0.5 (green), 0.75 (blue) and 1 (purple). At 20 realizations the model has very high uncertainty, but with 200 and more observations it appears as though the model is capable of clearly separating the calibrated and uncalibrated curves. The prior distribution is a Dir($\vec{1}$) distribution. 

\begin{figure}[h!]
    \centering
    \includegraphics[width=\linewidth]{ 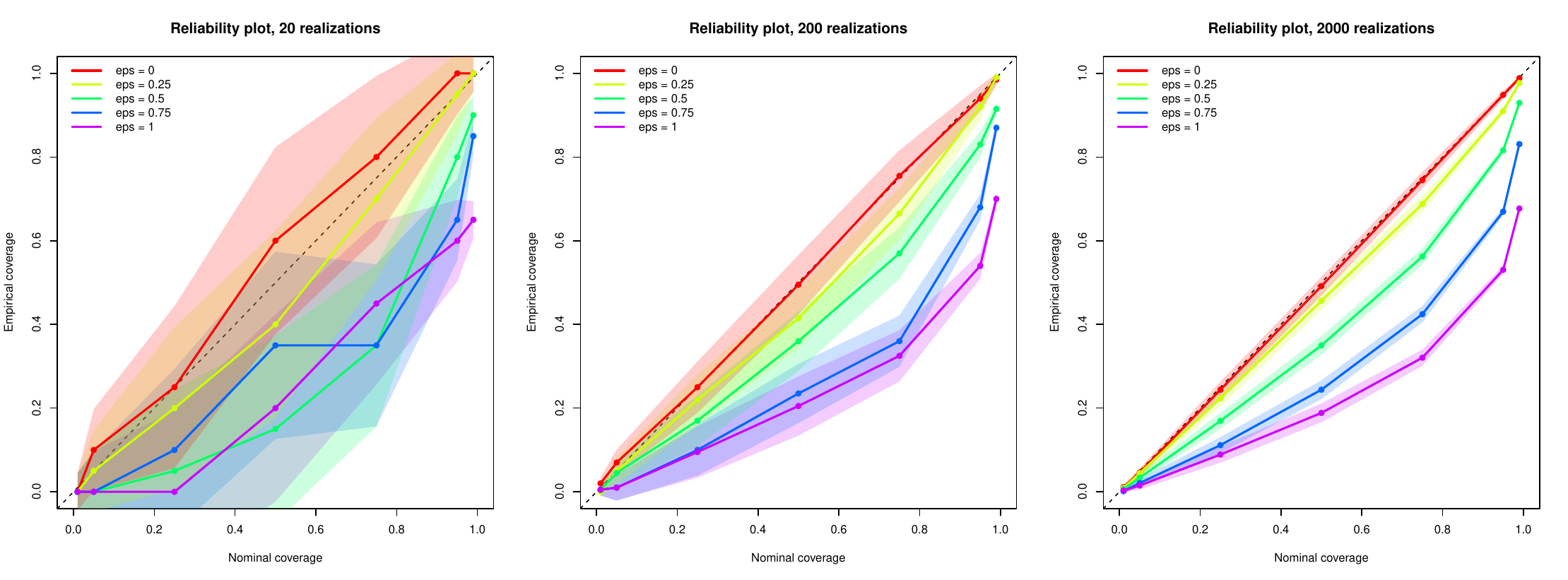}
    \caption{Reliability plots for varying $\epsilon$ based on 20, 200, and 2000 observations along with uncertainty derived from the posterior distribution of the cumulative coverage probabilities. The values of $\epsilon$ are 0 (red), 0.25 (yellow), 0.5 (green), 0.75 (blue) and 1 (purple).}
    \label{fig:reliability_plots}
\end{figure}

To examine the performance of the method, we have checked the probability that the method gives a Bayes Factor greater than 10, termed the detection probability, for each combination of $\epsilon$ and $n$ the number of calibration samples. A threshold of $BF_{10}>10$ was selected to be consistent with "strong" evidence of miscalibration as described in \cite{kass1995bayesfactors}. Each simulation setting was replicated 1000 times, and the resulting probabilities are reported in Figure \ref{fig:detection_prob}. All sample sizes do not report miscalibration when $\epsilon=0$, with only one of the 1000 trials reported as miscalibrated when $n=20$ and none for $n=200,2000$. When $n=20,200$ the method struggles to detect miscalibration when $\epsilon=0.25$, and when $\epsilon=0.5$ the detection probabilities are 0.09,0.64, and 0.98 for $n=20,200,2000$. However, for large $\epsilon$, and when $n$ increases, the method is able to detect miscalibration at all scales. The performance at $\epsilon=0$ suggests that this is a far more stringent criterion for detecting miscalibration than, in a frequentist test, rejecting for $p$ values less than 0.05.

\begin{figure}[h!]
    \centering
    \includegraphics[width=\linewidth]{ 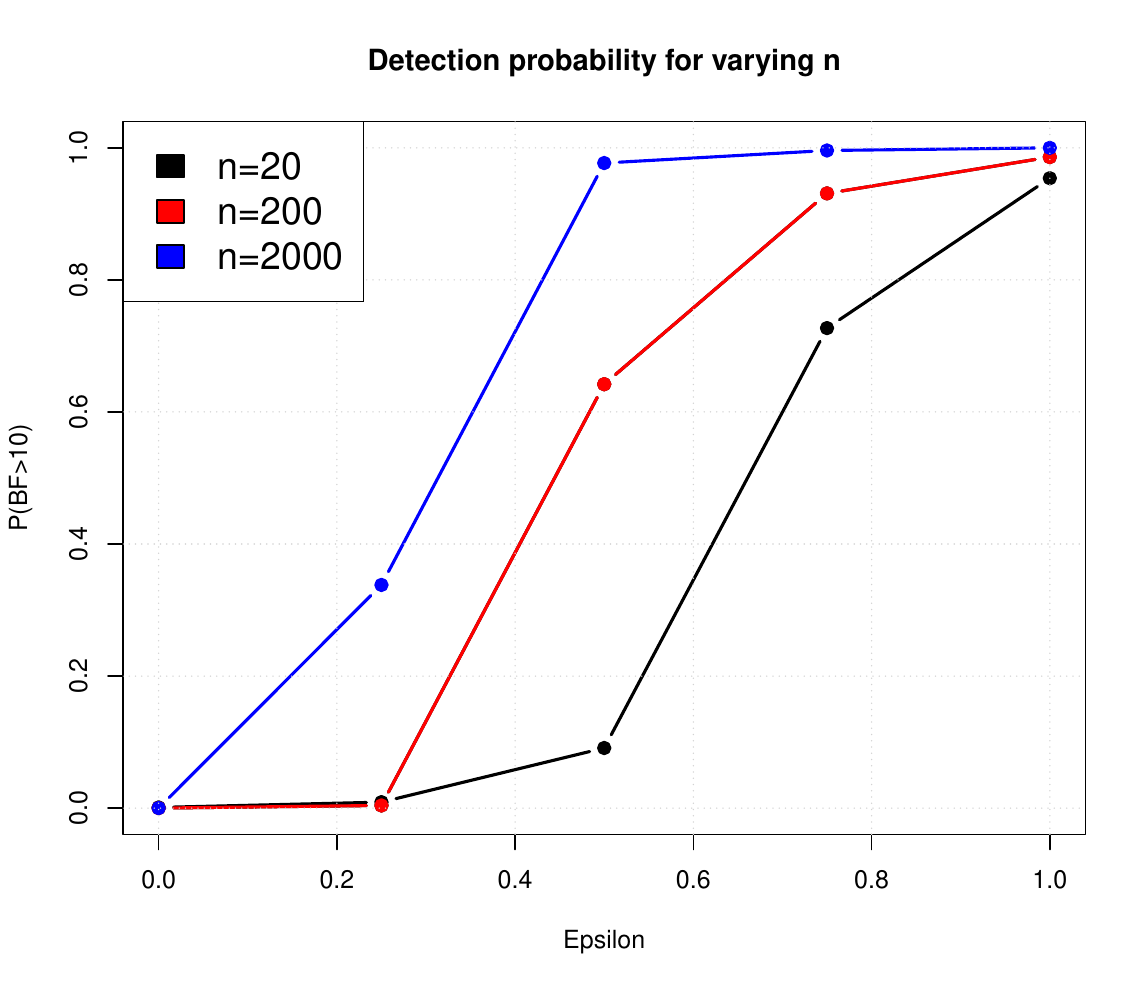}
    \caption{Probability of a Bayes factor larger than 10 for varying $\epsilon$ and $n$. We see that, for larger $\epsilon$, even with only 20 simulations the model is able to detect miscalibration easily.}
    \label{fig:detection_prob}
\end{figure}

One advantage of this method is the relative ease with which it can be extended into one that tests calibration relative to a specific region, for example allowing a tolerance for miscalibration or allowing predictive intervals that are too wide but not too narrow. Here we present an example. Suppose that, at coverage level $\alpha$, we are content to know that the actual coverages are within $\alpha\pm \frac{1}{10}\sqrt{\alpha(1-\alpha)}$ (See Figure \ref{fig:tol_region} for a plot of the region and empirical reliability curves with 200 sample points). This means that mild miscalibration is not of concern to us, and we are more concerned with catastrophic miscalibration. 

\begin{figure}[h!]
    \centering
    \includegraphics[width=\linewidth]{ 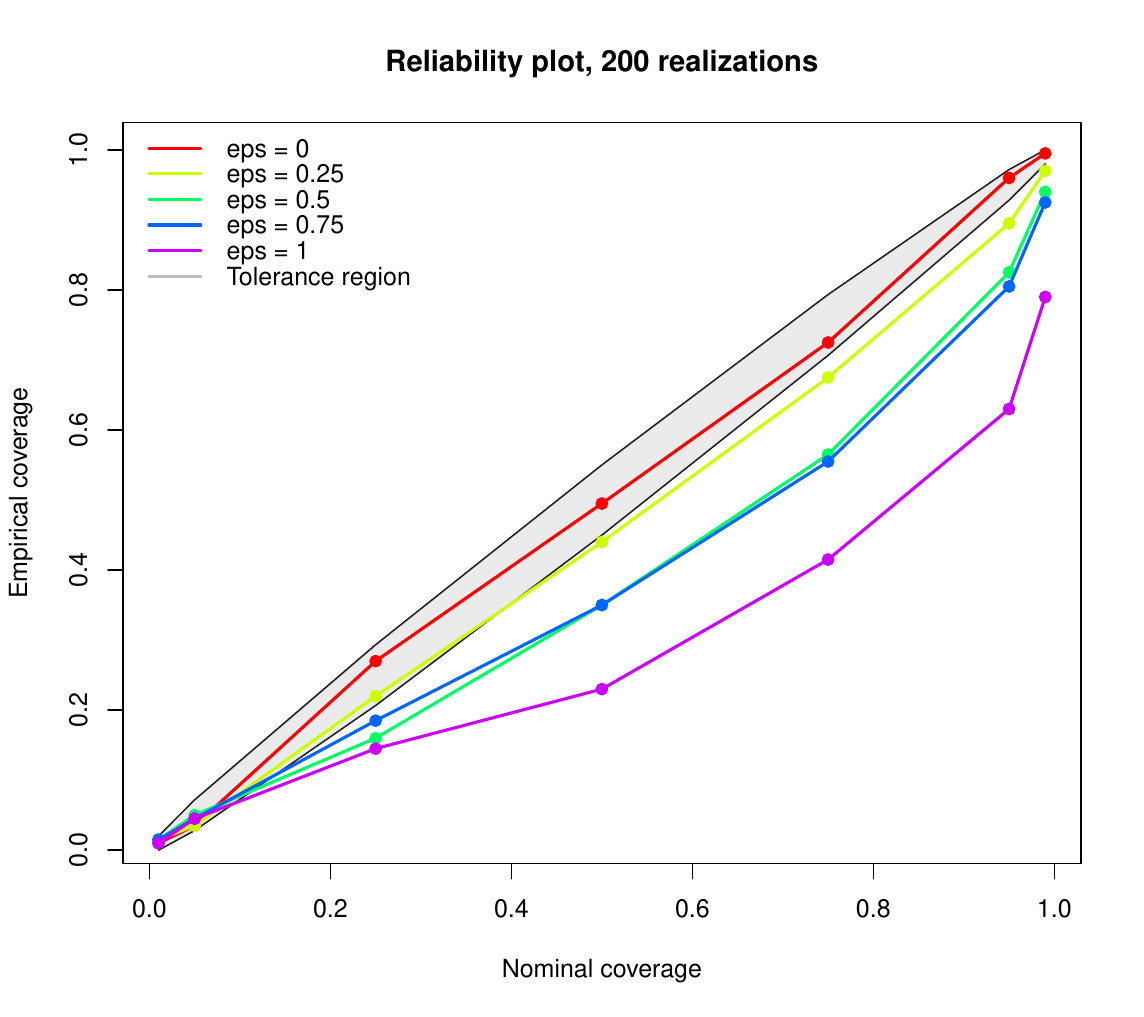}
    \caption{Reliability plots for the Kalman filtering problem using a tolerance region. Uncertainty bands are removed for ease of visualization. }
    \label{fig:tol_region}
\end{figure}

The detection probabilities are shown in Figure \ref{fig:detect_tol}. They tell a similar story to the previous model, with the exception that in this case the filter is considered acceptably calibrated by all models when $\epsilon=0.25$. This points to the advantage of this method, since we are often not concerned with mild miscalibration it may be worthwhile to use a method that is designed to ignore it. Another difference, not shown here, is that the Bayes factors are relatively much smaller in this case than when doing a two sided test. This is because of the size of the tolerance region, meaning that many more observed datasets may be consistent with it. 

\begin{figure}[h!]
    \centering
    \includegraphics[width=\linewidth]{ 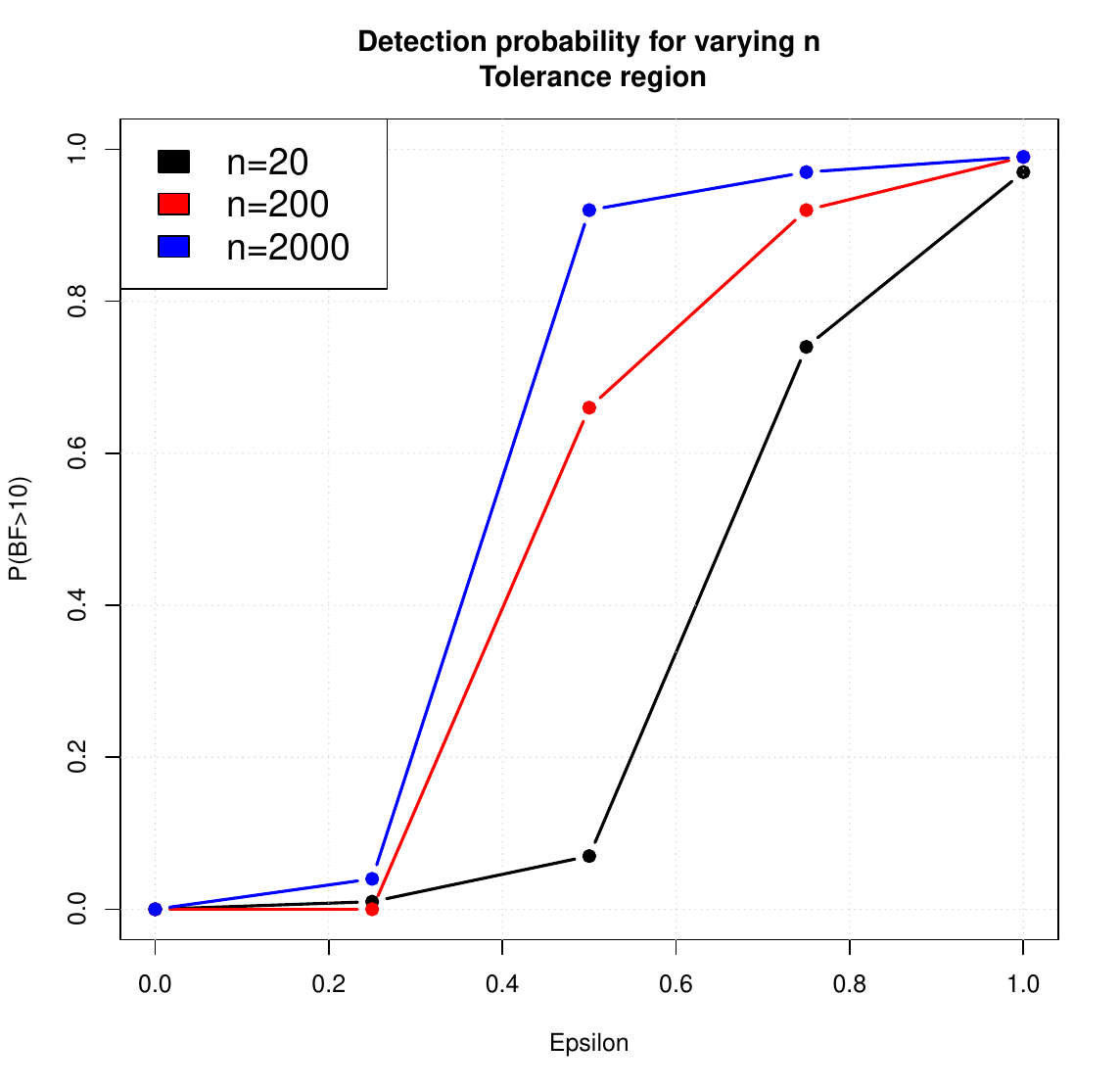}
    \caption{Detection probabilities under the same design as before, but detecting departure of the reliability curve from a tolerance band rather than from perfect calibration. }
    \label{fig:detect_tol}
\end{figure}

\section{Discussion}
This work presents a straightforward Bayesian approach to calibration assessment applicable when the full predictive distribution is unavailable and only confidence intervals or predictive sets are presented. By considering estimation of the reliability curve as a problem of estimating the probability that the true parameter appears in a set of disjoint, exhaustive sets, we derive a Dirichlet-Multinomial model for the reliability curve and Bayes factors for determining calibration both relative to a perfectly calibrated model and relative to a hypothesis that the reliability curve lies within some region of the x-y plane. This approach is simple and more informative than reporting empirical coverage fractions, and allows estimation of the number of samples required to achieve a desired level of uncertainty.

While this work is being put on hold immediately, several directions for future work remain. The most immediate is the application of the method to quantile regression, which is planned for the future. 
Next is the development of frequentist equivalents of the proposed tests. Tests for undercoverage can be described in terms of the stochastic dominance of one reliability curve over another, which connects naturally to order-restricted inference and chi-bar-square asymptotics \citep{silvapulle2005constrained}, and a likelihood ratio test can be 
derived using the pool-adjacent-violators algorithm for the constrained MLE. We leave the full development of these tests, including their 
asymptotic properties and finite-sample bootstrap implementations, to future work. Approximate two-sided tests, such as the one based on the $T_{trunc}$ statistic described in \cite{balakrishnan2018hypothesis_testing_review}, can be used for similar purposes. 

A second direction concerns the nonparametric limit of the Dirichlet-Multinomial model mentioned in Remark 1. As the number of nominal levels $d \to \infty$ with the levels becoming dense in 
$(0,1)$, the Dirichlet model converges to a Dirichlet process prior on the function $g$ in the representation $f(\alpha) = \int_0^\alpha g(t)\,dt$ of the reliability curve. This connects the finite-level calibration assessment to the problem of 
estimating the full reliability curve as a random function, which would yield a posterior credible band for $f(\alpha)$ at every $\alpha \in (0,1)$ simultaneously. This may be doable via probabilistic 
maximum entropy approaches \citep{bretthorst}, Gaussian process priors on $\log g$, or isotonic regression, for example. 

Finally, while the present paper focuses on scalar parameters and univariate credible intervals, the extension to multivariate parameters and joint credible regions is practically important for 
applications in data assimilation and uncertainty quantification for complex systems. Under the assumption that the distribution for the parameter vector is multivariate Gaussian, 
\begin{equation}
    \vec{\theta}|X\sim\mathcal{N}(\vec{\theta}_*,\mathbf{\Sigma})
\end{equation}
calibration implies that the squared Mahalanobis distance 
\begin{equation}
    D^2=\left(\vec{\theta}_0-\vec{\theta}_*\right)^T\mathbf{\Sigma}^{-1}\left(\vec{\theta}_0-\vec{\theta}_* \right)
\end{equation}
where $\vec{\theta}_0$ is the true parameter vector, has a $\chi^2$ distribution with$p=\text{len}(\vec{\theta})$ degrees of freedom. Consequently, calibration can be assessed by checking whether $\alpha = \Pr_{\vec{\theta}_0,X}\left(D^2\le \chi^2_{p,\alpha}\right)$ using the same method. This is also easily applied to methods that report these uncertainty ellipses automatically rather than full distributions. We leave these extensions to future work.

\vspace{0.5in}
\bibliography{refs_reliabilityplots}

@article{kass1995bayesfactors,
  title={Bayes factors},
  author={Kass, Robert E and Raftery, Adrian E},
  journal={Journal of the american statistical association},
  volume={90},
  number={430},
  pages={773--795},
  year={1995},
  publisher={Taylor \& Francis},
  url={https://www.jstor.org/stable/2291091}
}

@book{silvapulle2005constrained,
  title={Constrained statistical inference: Inequality, order and shape restrictions},
  author={Silvapulle, Mervyn J and Sen, Pranab Kumar},
  year={2005},
  publisher={John Wiley \& Sons}
}

@book{jeffreys1998theory,
  title={The theory of probability},
  author={Jeffreys, Harold},
  year={1998},
  publisher={OuP Oxford}
}

@article{jarosz2014odds,
  title={What are the odds? A practical guide to computing and reporting Bayes factors},
  author={Jarosz, Andrew F and Wiley, Jennifer},
  journal={The Journal of Problem Solving},
  volume={7},
  number={1},
  pages={2},
  year={2014},
  publisher={Purdue University Press}
}

@inproceedings{bretthorst,
  title={The maximum entropy method of moments and Bayesian probability theory},
  author={Bretthorst, G Larry},
  booktitle={AIP Conference proceedings},
  volume={1553},
  number={1},
  pages={3--15},
  year={2013},
  organization={American Institute of Physics}
}

@article{gneiting2007probabilistic,
  title={Probabilistic forecasts, calibration and sharpness},
  author={Gneiting, Tilmann and Balabdaoui, Fadoua and Raftery, Adrian E},
  journal={Journal of the Royal Statistical Society Series B: Statistical Methodology},
  volume={69},
  number={2},
  pages={243--268},
  year={2007},
  publisher={Oxford University Press}
}

@article{brocker2007increasing,
  title={Increasing the reliability of reliability diagrams},
  author={Br{\"o}cker, Jochen and Smith, Leonard A},
  journal={Weather and forecasting},
  volume={22},
  number={3},
  pages={651--661},
  year={2007}
}

@misc{valentin2026recipescalibrationcheckssafetycritical,
      title={Recipes for Calibration Checks in Safety-Critical Applications}, 
      author={Romeo Valentin},
      year={2026},
      eprint={2604.26479},
      archivePrefix={arXiv},
      primaryClass={stat.ME},
      url={https://arxiv.org/abs/2604.26479}, 
}

@misc{welch1995introduction,
  title={An introduction to the Kalman filter},
  author={Welch, Greg and Bishop, Gary},
  year={1995}
}

@article{gebetsberger2017estimation,
   author = "Manuel Gebetsberger and Jakob W. Messner and Georg J. Mayr and Achim Zeileis",
      title = "Estimation Methods for Nonhomogeneous Regression Models: Minimum Continuous Ranked Probability Score versus Maximum Likelihood",
      journal = "Monthly Weather Review",
      year = "2018",
      publisher = "American Meteorological Society",
      address = "Boston MA, USA",
      volume = "146",
      number = "12",
      doi = "10.1175/MWR-D-17-0364.1",
      pages=      "4323 - 4338",
      url = "https://journals.ametsoc.org/view/journals/mwre/146/12/mwr-d-17-0364.1.xml"
}

@article{balakrishnan2018hypothesis_testing_review,
  title={Hypothesis testing for high-dimensional multinomials: A selective review},
  author={Balakrishnan, Sivaraman and Wasserman, Larry},
  year={2018},
  journal="The Annals of Statistics"
}

@InProceedings{pmlr-v80-kuleshov18a,
  title = 	 {Accurate Uncertainties for Deep Learning Using Calibrated Regression},
  author =       {Kuleshov, Volodymyr and Fenner, Nathan and Ermon, Stefano},
  booktitle = 	 {Proceedings of the 35th International Conference on Machine Learning},
  pages = 	 {2796--2804},
  year = 	 {2018},
  editor = 	 {Dy, Jennifer and Krause, Andreas},
  volume = 	 {80},
  series = 	 {Proceedings of Machine Learning Research},
  month = 	 {10--15 Jul},
  publisher =    {PMLR},
  url = 	 {https://proceedings.mlr.press/v80/kuleshov18a.html}}

@article{jiang2011smooth,
  title={Smooth isotonic regression: a new method to calibrate predictive models},
  author={Jiang, Xiaoqian and Osl, Melanie and Kim, Jihoon and Ohno-Machado, Lucila},
  journal={AMIA Summits on Translational Science Proceedings},
  volume={2011},
  pages={16},
  year={2011}
}

@article{resin2026hierarchies,
  title={Hierarchies of Calibration: Classification meets Regression},
  author={Resin, Johannes and Yang, Lu and Gneiting, Tilmann},
  journal={arXiv preprint arXiv:2606.03245},
  year={2026}
}

@article{deperettiGraphicalMethodsInvestigating2010,
  title = {Graphical Methods for Investigating the Finite-Sample Properties of Confidence Regions},
  author = {De Peretti, Christian and Siani, Carole},
  year = 2010,
  journal = {Computational Statistics \& Data Analysis},
  volume = {54},
  number = {2},
  pages = {262--271},
  issn = {01679473},
  doi = {10.1016/j.csda.2009.09.012},
  urldate = {2026-10-03},
  copyright = {https://www.elsevier.com/tdm/userlicense/1.0/},
  langid = {english}
}

@article{JMLR:v27:24-1139,
  title = {Simulation-Based Calibration of Uncertainty Intervals under Approximate Bayesian Estimation},
  author = {Savitsky, Terrance D. and Gershunskaya, Julie},
  year = 2026,
  journal = {Journal of Machine Learning Research},
  volume = {27},
  number = {20},
  pages = {1--32}
}
\newpage
\appendix
\section{The Joint Distribution of the $c_i$}\label{app:cidist}
Earlier, we make the argument that, due to the fact that the categories $A_i$ are disjoint and exhaustive, the data are described by a multinomial distribution and therefore it is more natural to work with the "gap probabilities" $q_i$. However, in a typical setting the question of interest is, for example, "How often does my 95\% credible interval actually contain the truth?" rather than "How often does the truth fall within the 95\% credible set but not the 75\% credible set?". For this reason, we can turn back to the definition of the gap probabilities in Eq. \eqref{eq:c_def} to derive the joint distribution of $c_i$. Notably, the constraint that $c_i\le c_{i+1}$ means that they are not independent. 
\begin{prop}
The joint posterior density of $(c_1, \ldots, c_m)$ on the ordered region
$\{0 < c_1 < c_2 < \cdots < c_m < 1\}$ is
\begin{equation}
    p(c_1, \ldots, c_d \mid n_1,n_2,...,n_{d+1}) =
    \frac{\Gamma\!\left(\sum_{i=1}^{d+1} a_i\right)}{\prod_{i=1}^{d+1} \Gamma(a_i)}
    \prod_{i=1}^{d+1} (c_i - c_{i-1})^{a_i - 1}
\end{equation}
where $c_0=0$ and $c_{d+1}=1$ for $i = 1, \ldots, d+1$. 
\end{prop}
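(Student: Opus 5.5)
The plan is a change of variables from the posterior of the gap probabilities to the cumulative coverages. By Eq.~\eqref{eq:posterior_dist_q}, $(q_1,\ldots,q_{d+1})\mid \vec n \sim \mathrm{Dir}(a_1,\ldots,a_{d+1})$ with $a_i=\gamma_i+n_i$. Because the $q_i$ sum to one, this law is really a density on the first $d$ coordinates $(q_1,\ldots,q_d)$, taking values in the open region $\{q_i>0,\ \sum_{i\le d} q_i<1\}$. There the density is
\begin{equation*}
\frac{\Gamma\left(\sum_{i=1}^{d+1} a_i\right)}{\prod_{i=1}^{d+1}\Gamma(a_i)}\prod_{i=1}^{d}q_i^{a_i-1}\Bigl(1-\sum_{i=1}^d q_i\Bigr)^{a_{d+1}-1},
\end{equation*}
with respect to Lebesgue measure on $\R^d$. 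This is the standard Dirichlet density, which I will take as given.

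Next I will introduce the map $T:(q_1,\ldots,q_d)\mapsto(c_1,\ldots,c_d)$ with $c_i=\sum_{i'\le i}q_{i'}$, as in Eq.~\eqref{eq:c_def}. Its inverse is $q_i=c_i-c_{i-1}$ for $i=1,\ldots,d$, with $c_0=0$. The remaining coordinate is then $q_{d+1}=1-\sum_{i\le d}q_i=1-c_d=c_{d+1}-c_d$, using the convention $c_{d+1}=1$.

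I will then check two properties of $T$. First, $T$ is a bijection from the open simplex region onto the ordered region $\{0<c_1<\cdots<c_d<1\}$: positivity of $q_1,\ldots,q_d$ is equivalent to strict increase of the $c_i$, and $q_{d+1}>0$ is equivalent to $c_d<1$. Second, $T$ is linear with a lower-triangular matrix of ones, so its Jacobian determinant is $1$. The inverse map is also lower triangular (bidiagonal), with ones on the diagonal and $-1$ on the subdiagonal, so its Jacobian is $1$ as well.

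Substituting into the change-of-variables formula leaves the normalizing constant unchanged. Each factor $q_i^{a_i-1}$ becomes $(c_i-c_{i-1})^{a_i-1}$ for $i=1,\ldots,d+1$, which gives exactly the stated density on the ordered region. Outside that region the density is zero.

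No step is a real obstacle. The only point needing care is being explicit that the Dirichlet has a density on the $d$-dimensional parametrization rather than on $\R^{d+1}$, and that $c_{d+1}=1$ is a fixed quantity, not a variable. I will also note that the statement's ``$c_m$'' should read $c_d$.
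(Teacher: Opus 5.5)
Your proposal is correct and follows the same route as the paper: a linear change of variables from the $\mathrm{Dir}(a_1,\ldots,a_{d+1})$ posterior on the gap probabilities to the cumulative coverages, using the lower-triangular matrix of ones with unit Jacobian determinant. Your version is more careful than the paper's one-line argument. You work explicitly with the $d$-dimensional parametrization, check that the map is a bijection onto the ordered region, and note that $c_m$ should read $c_d$.
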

\begin{proof}
    Recall that $a_i=n_i+\gamma_i$. This follows from Eq. \eqref{eq:c_def}. The Jacobian of the transformation from the $\text{Dir}(a_1,...,a_{d+1})$ distribution is given by
    \begin{equation}
        \label{eq:c_xform_jacobian}
        J = \begin{pmatrix}
            1&0&0&...&0\\
            1&1&0&...&0\\
            \vdots&\vdots&\vdots&\vdots&\vdots\\
            1&1&1&\vdots&1
        \end{pmatrix}
    \end{equation}
    which has unit determinant. Making the appropriate substitution gives the result.
\end{proof}

The marginal posteriors are the same Beta distributions that would be
obtained by treating $c_i$ independently with binomial likelihoods as done in \cite{valentin2026recipescalibrationcheckssafetycritical}
using counts $k_{\alpha_i} = \sum_{j=1}^i n_j$, namely a $\text{Beta}(\sum_{j=1}^ia_i,\sum_{j=i+1}^{d+1}a_i )$ distribution. The multinomial--Dirichlet model therefore recovers the standard pointwise Beta posteriors as marginals while additionally
specifying the dependence structure between the $c_i$. For the marginal distribution of $c_i$ we see that the variance of $c_i$ given the data is
\begin{equation}
    \sigma^2 = VAR(c_i|x)=\frac{\left(k_{\alpha_i}+\sum_{j=1}^i\gamma_i\right)\left(n-k_{\alpha_i}+\sum_{j=i+1}^{d+1}\gamma_i\right)}{(n+\sum_i\gamma_i)^2(n+1+\sum_i\gamma_i)}
\end{equation}
Assuming $k_{\alpha_i}>>\sum\gamma_i$ and that the model is well-calibrated we can approximate this as
\begin{equation}
    \label{eq:stdev_approx}
    \sigma \approx \sqrt{\frac{\alpha_i(1-\alpha_i)}{n}} \le \frac{1}{2\sqrt{n}}
\end{equation}
which gives a heuristic for the error in the coverage estimation as a function of $n$ and allows us to select $n$ to achieve an acceptably accurate estimate of the coverage at level $\alpha_i$.

\end{document}